\newtheorem{theorem}{Theorem}[section]
\newtheorem{definition}[theorem]{Definition}
\newtheorem{assumption}[theorem]{Assumption}
\acrodef{KCF}{Kronecker Canonical Form}
\acrodef{MAPE}{Maximum a Posterior}
\acrodef{MLE}{Maximum Likelihood Estimate}
\author{Ali A. Al-Matouq$^{a}$ $^{\ast}$\thanks{$^\ast$Corresponding author. %
Email: aalmatou@mines.edu \vspace{6pt}}%
\\\vspace{6pt} $^{a}${\em{Department of Electrical Engineering and Computer Science,}}$^{b}${\em{Department of Applied Mathematics and Statistics, Colorado School of Mines 1600 Illinois St., Golden, CO 80401}}%
\\\vspace{3pt} }
\title{Derivation of the Maximum a Posteriori Estimate for Discrete Time Descriptor Systems}
\begin{document}

\maketitle 

\begin{abstract}
In this report a derivation of the MAP state estimator objective function for general (possibly non-square) discrete time causal/non-causal descriptor systems is presented.  The derivation made use of the Kronecker Canonical Transformation to extract the prior distribution on the descriptor state vector so that Maximum a Posteriori (MAP) point estimation can be used.  The analysis indicates that the MAP estimate for index 1 causal descriptor systems does not require any model transformations and can be found recursively.  Furthermore, if the descriptor system is of index 2 or higher and the noise free system is causal, then the MAP estimate can also be found recursively without model transformations provided that model causality is accounted for in designing the stochastic model.  
\end{abstract}

\begin{IEEEkeywords}
Descriptor Systems, Maximum a Posteriori Estimate, Maximum Likelihood Estimate
\end{IEEEkeywords}

\section{Maximum a Posteriori Estimation for Discrete Time Descriptor Systems} \label{ch3}

\subsection{Introduction}
The first objective of this chapter is to find the maximum a posterior (MAP) estimate for the state vector sequence $x_{k}$ given a stochastic discrete time descriptor system model (SDTDS), noisy measurements and an informative prior for $Ex_{0}$ as follows:
\begin{align}\label{dts1_ch3}
Ex_{k+1}=&Ax_{k}+Bu_{k}+Fw_{k} \\
y_{k}=&Hx_{k}+v_{k}\label{meas_ch3} \\
Ex_{0}&\sim  \mathtt{N}(\bar{r}_{0},P_{0}) \label{prior}
\end{align}
where $\mathtt{N}(\bar{r}_{0},P_{0})$ denotes a normally distributed random variable with mean $\bar{r}_{0}$ and variance $P_{0}$.  We assume here that only the sequence $u_{k}$ is deterministic and all other sequences are random.  The input disturbance sequence $w_{k}\in\mathbb{R}^{p}$ and the measurement noise sequence $v_{k}\in\mathbb{R}^{q}$ are i.i.d. normal random sequences; $w_{k}\sim \mathtt{N}(0,I_{q})$ and $v_{k}\sim \mathtt{N}(0,R)$, where $R\succ 0$.  Furthermore, the random variables $Ex_{0},w_{k},v_{k}$ are assumed uncorrelated between each other.  The matrices $E, A\in \mathbb{R}^{n_{eq}\times n}$, $B\in \mathbb{R}^{n_{eq}\times j}$, $y\in \mathbb{R}^{m}$, $H\in \mathbb{R}^{m\times n}$ and $F\in \mathbb{R}^{n_{eq}\times p}$.

The maximum a posteriori estimate of $x_{k}$ is defined as the mode of the posterior distribution denoted by $\hat{\mathbf{x}}^{map}$ and given by:
\begin{align}
\hat{\mathbf{x}}^{map}:=&\mbox{arg}\max_{\mathbf{x}} p_{\mathbf{x}|\mathbf{y}}(\mathbf{x}|\mathbf{y})\nonumber \\
= & \mbox{arg}\max_{\mathbf{x}} p_{\mathbf{y}|\mathbf{x}}(\mathbf{y}|\mathbf{x})\frac{p_{\mathbf{x}}(\mathbf{x})}{p_{\mathbf{y}}(\mathbf{y})}= \mbox{arg}\max_{\mathbf{x}}(\log p_{\mathbf{y}|\mathbf{x}}(\mathbf{y}|\mathbf{x}) + \log p_{\mathbf{x}}(\mathbf{x}))
\end{align}
where, $\hat{\mathbf{x}}^{map}=\{\hat{x}_{k}^{map}\}_{k=0}^{T}$, $\mathbf{x}=\{x_{k}\}_{k=0}^{T}$, $\mathbf{y}=\{y_{k}\}_{k=0}^{T}$ and $\hat{x}_{k}^{map}$ is the MAP estimate at time $k$.  Linear descriptor systems define $x_{k}$ implicitly and hence the prior distribution $p_{\mathbf{x}}(\mathbf{x})$ can not be found directly from the stochastic descriptor system given in \eqref{dts1_ch3}.  A proceeding step is needed to convert the stochastic descriptor system to a format that reveals the prior distribution on $x_{k}$.  On the other hand, the constrained maximum likelihood estimate $\hat{x}_{k}^{ml}$ is found by treating the state sequence $x_{k}$ as a parameter and the estimates are obtained by maximizing the likelihood function:
\begin{align}
&\hat{x}_{k}^{ml}:=\mbox{arg}\max_{\mathbf{x}} \mathtt{L}(\mathbf{x}|\mathbf{y})=\mbox{arg}\max_{\mathbf{x}} p_{\mathbf{y}|\mathbf{x}}(\mathbf{y}|\mathbf{x})\nonumber \\
&\mbox{subject to}~~\mathbf{x}\in \mathtt{C}
\end{align}
where the constraint $\mathbf{x}\in \mathtt{C}$ forms the prior information about the parameter $\mathbf{x}$.  In \cite{Nikoukhah1992}, the input sequence $Bu_{k}$ and the prior for $Ex_{0}$ were reformulated as noisy measurements:
\begin{align*}
Bu_{k}=&Ex_{k+1}-Ax_{k}-Fw_{k}\\
\bar{r}_{0}=&Ex_{0}+e
\end{align*}
where $e$ is a Gaussian zero mean random vector with variance $P_{0}$ and independent of $w_{k},v_{k}$, while $x_{k}$ was viewed as parameters.  The objective was to construct recursively the filtered or predicted estimate given by the conditional mean.  It may be argued, however, that this paradigm shift in viewing $Bu_{k}$ as a measurement is inconsistent with the reality that $Bu_{k}$ is a user defined input that is not random.  Also, the study in \cite{Nikoukhah1999} presented an algorithm for transforming non-causal stochastic descriptor systems into causal systems but did not analyse how to avoid stochastic non-causality which is more meaningful for state estimation problems in practice.  In \cite{Gerdin2006} and \cite{schon2006estimation}, matrix and state variable transformations were used to recast state estimation problems for square causal/non-causal descriptor systems into conventional state space estimation problems.  However, the method results in estimating transformed state variables instead of the original model variables and hence adds a requirement for an inverse transformation at every iteration for finding the estimates.  Moreover, the method was not generalized to non-square descriptor systems.

In this chapter, it is shown that model transformations are not necessary if the system is causal and the algebraic equations are modelled properly to avoid stochastic non-causality.  The analysis is based on examining the various subsystems that descriptor systems can represent using Kronecker canonical transformation.  This canonical form is suitable for extracting the prior on $x_{k}$ for the most general case of the system dynamics \eqref{dts1_ch3} (i.e. causal or non-causal, square or non-square), and is also capable of revealing the necessary assumptions and restrictions needed on the stochastic model and noisy measurements that define a well posed estimation problem.  

\subsection{The Real Kronecker Canonical Form of a Matrix Pencil $\lambda E-A$}
The Kronecker canonical form transformation (KCF) for singular matrix pencils $\lambda E-A$ was developed by the German mathematician Leopold Kronecker in 1890.  This is also often called the generalized Schur decomposition of an arbitrary matrix pencil $\lambda E-A$ and is a generalization of the Jordan canonical form for a square matrix.   

\begin{definition}\cite{gantmakher1959}
The matrix pencil $\lambda E-A$ is said to be singular if $n_{eq}\ne n$ or $det(\lambda E-A)=0~\forall \lambda \in \mathbb{C}$.  Otherwise, if $n_{eq}=n$ and there exist a $\lambda \in \mathbb{C}$ such that $det(\lambda E-A)\ne 0$ then the matrix pencil is called regular.
\end{definition}

\begin{definition}\cite{gantmakher1959}
The matrix pencil $\lambda \tilde{E}-\tilde{A}$ is said to be strictly equivalent to the matrix pencil $\lambda E-A$ for all $\lambda\in \mathbb{C}$ if there exist constant non-singular matrices $P \in \mathbb{C}^{n_{eq}\times n_{eq}}$ and $Q \in \mathbb{C}^{n\times n}$ independent of $\lambda$ such that:
\begin{align}\label{equiv}
\tilde{E}=PEQ,~~\tilde{A}=PAQ
\end{align}
\end{definition}

\begin{definition}
A nonzero vector $x\in \mathbb{C}^{n}$ is a generalized eigenvector of the pair $(E,A)$ if there exists a scalar $\lambda\in \mathbb{C}$, called a generalized eigenvalue such that:
\begin{align*}
(\lambda E-A)x=0
\end{align*}
\end{definition}

\begin{theorem}\label{kcf}\cite{gantmakher1959}, \cite{kunkel2006}
Let $E,A \in \mathbb{R}^{n_{eq}\times n}$.  Then there exists non-singular matrices $P\in \mathbb{R}^{n_{eq}\times n_{eq}}$ and $Q\in \mathbb{R}^{n \times n}$ for all $\lambda \in \mathbb{R}$ such that:
\begin{align}\label{transform}
P(\lambda E-A)Q=\lambda \tilde{E}-\tilde{A}=diag\left(\mathtt{U}_{\epsilon_{0}},\cdots,\mathtt{U}_{\epsilon_{p}},\mathtt{J}_{\rho_{1}},\cdots,\mathtt{J}_{\rho_{r}},\mathtt{N}_{\sigma_{1}},\cdots,\mathtt{N}_{\sigma_{o}},\mathtt{O}_{\eta_{0}},\cdots,\mathtt{O}_{\eta_{q}}\right)
\end{align}
where the matrix blocks are defined as follows:
\begin{enumerate}
\item The block $\mathtt{U}_{\epsilon_{0}}$ correspond to the existence of scalar dependencies between the columns of $\lambda E-A$ and is a zero matrix of size $n_{eq}\times \epsilon_{0}$.  Blocks of the type $\mathtt{U}_{\epsilon_{i}}$ for $i=1,\cdots p$ are the bidiagonal pencil blocks of size $\epsilon_{i}\times (\epsilon_{i}+1)$ and have the form:
\begin{align}\label{under}
\mathtt{U}_{\epsilon_{i}}=\lambda E_{\mathtt{U}_{\epsilon_{i}}}-A_{\mathtt{U}_{\epsilon_{i}}}=\lambda \left[\begin{array}{cccc}0&1&&\\&\ddots & \ddots &\\& &0&1\end{array}\right]-\left[\begin{array}{cccc}1&0&&\\& \ddots & \ddots &\\& &1&0\end{array}\right]
\end{align}
This subsystem has a right null space polynomial vector of the form $[\lambda^{\epsilon_{i}},\lambda^{\epsilon_{i}-1},\cdots,\lambda,1]^{T}$ for any $\lambda$. 

\item $\mathtt{J}_{\rho_{i}}$ are the real Jordan blocks of size $\rho_{i}\times \rho_{i}$ for $i=1,\cdots r$ that correspond to the  generalized eigenvalues of $\lambda E-A$ of the form $(\alpha_{\rho_{i}}-\lambda)^{\rho_{i}}$, with:
\begin{align}\label{jord}
\mathtt{J}_{\rho_{i}}(\alpha_{i})=\lambda E_{\mathtt{J}_{\rho_{i}}}-A_{\mathtt{J}_{\rho_{i}}}=\lambda \left[\begin{array}{cccc}1& &&\\&\ddots &&\\& &\ddots &\\&&&1 \end{array}\right]-\left[\begin{array}{cccc}\alpha_{\rho_{i}}&1&&\\& \ddots & \ddots &\\& &\ddots &1\\&&&\alpha_{\rho_{i}}\end{array}\right]
\end{align}
for real generalized eigenvalues $\alpha_{\rho_{i}}\in \mathbb{R}$ and:

\begin{align}\label{jord}
&\mathtt{J}_{\rho_{i}}(\alpha_{i})=\lambda E_{\mathtt{J}_{\rho_{i}}}-A_{\mathtt{J}_{\rho_{i}}}=\lambda \left[\begin{array}{cccc}1& &&\\&\ddots &&\\& &\ddots &\\&&&1 \end{array}\right]-\left[\begin{array}{cccc}\Delta_{\rho_{i}}&I_{2}&&\\& \ddots & \ddots &\\& &\ddots &I_{2}\\&&&\Delta_{\rho_{i}}\end{array}\right],\nonumber \\
&\Delta_{\rho_{i}}:=\left[\begin{array}{cc} \mu_{\rho_{i}}& \omega_{\rho_{i}} \\ -\omega_{\rho_{i}} & \mu_{\rho_{i}} \end{array} \right]
\end{align}
for complex conjugate generalized eigenvalues $\alpha_{i}=\mu_{i}+j\omega_{i}$, $\bar{\alpha}_{k}=\mu_{i}-j\omega_{i}~\in \mathbb{C}$ with $\omega_{i}>0$.  

\item $\mathtt{N}_{\sigma_{i}}$ are the nilpotent blocks of size $\sigma_{i}\times \sigma_{i}$ for $i=1,\cdots o$ that correspond to the infinite generalized eigenvalues of $\lambda E-A$ with multiplicity $\sigma_{i}$ and have the form:
\begin{align}\label{nilp}
\mathtt{N}_{\sigma_{i}}=\lambda E_{\mathtt{N}_{\sigma_{i}}}-A_{\mathtt{N}_{\sigma_{i}}}=\lambda \left[\begin{array}{cccc}0&1&&\\&\ddots &\ddots &\\& &\ddots &1\\&&&0 \end{array}\right]-\left[\begin{array}{cccc}1& &&\\&\ddots &&\\& &\ddots &\\&&&1 \end{array}\right]
\end{align}
\item The block $\mathtt{O}_{\eta_{0}}$ correspond to the existence of scalar dependencies between the rows of $\lambda E-A$ and is a zero matrix of size $\eta_{0}\times n$.  Blocks $\mathtt{O}_{\eta_{i}}$ are the bidiagonal blocks of size $(\eta_{i}+1)\times \eta_{i}$ for $i=1,\cdots q$ and have the form:
\begin{align}\label{over}
\mathtt{O}_{\eta_{i}}=\lambda E_{\mathtt{O}_{\eta_{i}}}-A_{\mathtt{O}_{\eta_{i}}}=\lambda \left[\begin{array}{ccc}1&&\\0&\ddots &\\&\ddots &1\\&&0\end{array}\right]-\left[\begin{array}{cccc}0&&\\1& \ddots &\\&\ddots &0\\&&1\end{array}\right]
\end{align}
This subsystem has a left null space polynomial vector of $[1,\lambda,\cdots,\lambda^{\eta_{i}}]$ for any $\lambda$.  
\end{enumerate}
\end{theorem}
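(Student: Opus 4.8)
The plan is to prove existence of the decomposition \eqref{transform} by reducing the pencil $\lambda E - A$ to the stated block form through a finite chain of strict equivalences \eqref{equiv}, splitting the problem into a \emph{singular part} (the blocks $\mathtt{U}_{\epsilon_i}$ and $\mathtt{O}_{\eta_i}$, arising from rank deficiencies among the columns and rows of $\lambda E - A$) and a \emph{regular part} (the blocks $\mathtt{J}_{\rho_i}$ and $\mathtt{N}_{\sigma_i}$). Since strict equivalence is an equivalence relation that respects block-diagonal structure, it suffices to exhibit one such chain reaching the target form; I would work first over $\mathbb{C}$ to expose the eigenvalue structure and then descend to $\mathbb{R}$ to obtain the real transformations $P,Q$ required by the statement, and I would focus on existence rather than on uniqueness of the indices.

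First I would peel off the singular blocks. I would examine polynomial null vectors, i.e.\ vectors $x(\lambda)$ of polynomials with $(\lambda E - A)x(\lambda)=0$, and let $\epsilon$ be the smallest degree of a nonzero such solution, the case $\epsilon=0$ corresponding to a constant vector in the common kernel of $E$ and $A$ and yielding the zero block $\mathtt{U}_{\epsilon_0}$. The key reduction lemma is that, using a minimal-degree null solution to build part of $Q$ together with a companion construction for $P$, the pencil becomes strictly equivalent to $\mathrm{diag}(\mathtt{U}_{\epsilon},\,\lambda E'-A')$, where the smaller pencil carries no null vector of degree below $\epsilon$; iterating extracts all the right blocks \eqref{under}. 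Applying the same argument to the transposed pencil $\lambda E^{T}-A^{T}$ (equivalently, to the left null vectors $y(\lambda)^{T}(\lambda E - A)=0$) and transposing back extracts the left blocks \eqref{over}. After these steps the surviving pencil is square with $\det(\lambda E - A)\not\equiv 0$, i.e.\ regular.

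For the regular part I would produce the Weierstrass form. Choosing $c\in\mathbb{C}$ with $\det(cE-A)\ne 0$ and forming $\tilde{E}=(cE-A)^{-1}E$, I would split the space into the generalized eigenspaces of $\tilde{E}$ for nonzero eigenvalues and for the eigenvalue $0$. On the nonzero part $\tilde{E}$ is invertible, so the pencil converts to $\lambda I - \hat{A}$ and the Jordan form of $\hat{A}$ delivers the finite-eigenvalue blocks \eqref{jord}; on the zero part $\tilde{E}$ is nilpotent, delivering the infinite-eigenvalue blocks \eqref{nilp}. Finally, to meet the requirement that $P,Q$ be real, I would perform the standard real descent: complex Jordan blocks occur in conjugate pairs, and each pair is combined by a real change of basis into a single real Jordan block built from the $2\times 2$ rotation block $\Delta_{\rho_i}$ of \eqref{jord}, while real eigenvalues retain the scalar form; the singular and nilpotent blocks are already real.

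The main obstacle is the reduction lemma of the singular step: constructing $P$ and $Q$ that cleanly separate one block $\mathtt{U}_{\epsilon}$ and, crucially, verifying that the complementary pencil $\lambda E'-A'$ inherits no null vector of degree smaller than $\epsilon$, so that the induction on matrix size terminates precisely with the minimal-index blocks. This bookkeeping with minimal-degree polynomial solutions — rather than the comparatively routine Jordan/Weierstrass analysis of the regular part or the standard real descent — is where essentially all the technical difficulty lies.
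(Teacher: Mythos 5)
The paper does not actually prove this theorem — its ``proof'' is a citation to Gantmacher — and your outline is exactly the classical argument contained in that reference: extract the minimal-index (singular) blocks by induction on minimal-degree polynomial null vectors, obtain the left blocks $\mathtt{O}_{\eta_i}$ by applying the same reduction to the transposed pencil, reduce the surviving regular pencil to Weierstrass form via $(cE-A)^{-1}E$ and its invertible/nilpotent splitting, and realify conjugate Jordan pairs into the $2\times 2$ blocks $\Delta_{\rho_i}$. Your plan is structurally correct and correctly isolates the one genuinely hard step — the reduction lemma that splits off a single $\mathtt{U}_{\epsilon}$ while guaranteeing the complementary pencil has no polynomial null solution of degree below $\epsilon$ — which you state but do not prove; that lemma is precisely the content of the chapter of Gantmacher the paper defers to, so your sketch matches the paper's (delegated) approach rather than departing from it.
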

\begin{proof}
See \cite{gantmakher1959} for full proof of the existence of $P$ and $Q$.  
\end{proof}
Note that the indices $\epsilon_{i},\rho_{i},\sigma_{i}$ and $\eta_{i}$ and the finite generalized eigenvalues $\alpha_{i}$ fully characterize the matrix pencil $\lambda E - A$.  The presence of all these blocks in a pencil reflects the most general case.  The matrices $E,A$ in \eqref{transform} may correspond to a descriptor system model as the one given in \eqref{dts1_ch3}.  In this case, the descriptor system model may contain multiple subsystems, some connected and some disjoint from each other.  Moreover, these subsystems may be under-determined while others over-determined depending on the existence of the blocks $\mathtt{U}_{\epsilon_{i}}$ and $\mathtt{O}_{\eta_{i}}$ in the transformed matrix pencil $\lambda \tilde{E}-\tilde{A}$.   A geometric implementation of the Kronecker canonical decomposition that results in finding real transformation matrices $P\in \mathbb{R}^{n_{eq}\times n_{eq}}$ and $Q \in \mathbb{R}^{n\times n}$ can be found in \cite{berger2012}.  This is important to avoid transforming real random variables $x_{0},w_{k},v_{k}$ to complex random variables which will complicate the analysis otherwise.

\subsection{Transformation of Stochastic Descriptor Systems to KCF}\label{transf}
In order to find the MAP estimate for the state sequence of a stochastic descriptor system \eqref{dts1_ch3} given noisy measurements \eqref{meas_ch3} and initial condition prior \eqref{prior} we need to find the prior on $\{x_{k}\}_{k=0}^{T}$ by transforming \eqref{dts1_ch3} into Kronecker canonical form as follows:
\begin{align}\label{dts2a}
PEQ\tilde{x}_{k+1}=&PAQ\tilde{x}_{k}+PBu_{k}+PFw_{k} \\
y_{k}=&HQ\tilde{x}_{k}+v_{k}
\end{align}
where, 
\begin{align}
\tilde{x}_{k}=Q^{-1}x_{k}=\left[\begin{array}{c}\tilde{x}_{k}^{(\mathtt{U})}\\\tilde{x}_{k}^{(\mathtt{J})}\\\tilde{x}_{k}^{(\mathtt{N})}\\\tilde{x}_{k}^{(\mathtt{O})}\end{array}\right] \label{trans2}
\end{align}
Since the pencil $P(\lambda E-A)Q$ is block diagonal we can compute the solution for each block separately as given by \cite{brull2007}.  In the sequel, we will present this solution in a form suitable for MAP estimation.  We partition the resulting transformed system matrices in \eqref{dts2a} as follows:
\begin{align}\label{mattrans}
\tilde{E}=&PEQ=\left[\begin{array}{cccc}E_{\mathtt{U}}&0&0&0\\ 0&E_{\mathtt{J}}&0&0\\0&0& E_{\mathtt{N}}&0\\ 0&0&0&E_{\mathtt{O}}\end{array}\right],~~\tilde{A}=PAQ=\left[\begin{array}{cccc}A_{\mathtt{U}}&0&0&0\\ 0&A_{\mathtt{J}}&0&0\\0&0& A_{\mathtt{N}}&0\\ 0&0&0&A_{\mathtt{O}}\end{array}\right],\nonumber \\
\tilde{B}=&PB=\left[\begin{array}{c}B_{\mathtt{U}}\\ B_{\mathtt{J}}\\ B_{\mathtt{N}}\\ B_{\mathtt{O}}\end{array}\right],~~\tilde{F}=PF=\left[\begin{array}{c}F_{\mathtt{U}}\\ F_{\mathtt{J}}\\ F_{\mathtt{N}}\\ F_{\mathtt{O}}\end{array}\right],~~\tilde{H}=HQ=\left[\begin{array}{cccc}H_{\mathtt{U}}& H_{\mathtt{J}}& H_{\mathtt{N}}& H_{\mathtt{O}}\end{array}\right]
\end{align}
where,
\begin{align}
E_{\mathtt{U}}=&diag(E_{\mathtt{U}_{\epsilon_{1}}},\cdots,E_{\mathtt{U}_{\epsilon_{p}}}),~~ E_{\mathtt{J}}=diag(E_{\mathtt{J}_{\rho_{1}}},\cdots,E_{\mathtt{J}_{\rho_{r}}}),\cdots etc.\nonumber \\~~
A_{\mathtt{U}}=&diag(A_{\mathtt{U}_{\epsilon_{1}}},\cdots,A_{\mathtt{U}_{\epsilon_{p}}}),~~ A_{\mathtt{J}}=diag(A_{\mathtt{J}_{\rho_{1}}},\cdots,A_{\mathtt{J}_{\rho_{r}}}), \cdots etc.\nonumber
\end{align}
where $diag (\cdot)$ denotes diagonal concatenation of matrices.  Similarly, $B_{\mathtt{U}},\cdots,B_{\mathtt{O}}$ and $F_{\mathtt{U}},\cdots,F_{\mathtt{O}}$ are defined conformally with the rows of $\tilde{E}$ and $\tilde{A}$.  Consequently, \eqref{dts1_ch3} can be expressed using the previous transformation and definitions as follows:
\begin{align}\label{dts2}
\tilde{E}\tilde{x}_{k+1}=&\tilde{A}\tilde{x}_{k}+\tilde{B}u_{k}+\tilde{F}w_{k} \\
y_{k}=&\tilde{H}\tilde{x}_{k}+v_{k}
\end{align}
The solution for each subsystem block will be presented next to examine the prior for each subsystem and to eventually find the MAP estimate for the untransformed variable $x_{k}$.  The analysis and solution of general non-square discrete time descriptor systems was recently conducted in \cite{brull2007}.  We conduct a similar analysis for the stochastic descriptor system \eqref{dts1_ch3}.

\subsubsection{The Under-Determined Subsystem Block}
In the transformed pencil $\lambda \tilde{E}-\tilde{A}$ multiple columns of zeros will occur depending on the number of dependent columns in the original matrix pencil $\lambda E-A$ that differ by a scalar factor.  This corresponds to the existence of the block $\mathtt{U}_{\epsilon_{0}}=0$ with size $n_{eq}\times \epsilon_{0}$.  This implies that there will be descriptor state variables that are unspecified by the descriptor system \eqref{dts1_ch3} and the number of unspecified descriptor states will depend on the number of zero columns in $\lambda \tilde{E}-\tilde{A}$.

For the general case, when $\epsilon_{i}>0$, there will be dependent columns in $\lambda \tilde{E}-\tilde{A}$.  Assuming $\epsilon_{1}>0$, and only the block $\mathtt{U}_{\epsilon_{1}}$ appears in the KCF, the following difference equation can be formed from \eqref{under}:
\begin{align}\label{under1}
E_{\mathtt{U}}\tilde{x}_{k+1}^{(\mathtt{U})}=&A_{\mathtt{U}}\tilde{x}_{k}^{(\mathtt{U})}+B_{\mathtt{U}}u_{k}+F_{\mathtt{U}}w_{k}
\end{align}
Referring to \eqref{under}, this can be written in expanded form as:
\begin{align}\label{xunder}
\left[\begin{array}{c}\tilde{x}_{\mathtt{U},k+1}^{2}\\ \vdots \\ \tilde{x}_{\mathtt{U},k+1}^{\epsilon_{1}+1}\end{array}\right]
=\left[\begin{array}{c} \tilde{x}_{\mathtt{U},k}^{1}\\ \vdots \\ \tilde{x}_{\mathtt{U},k}^{\epsilon_{1}}\end{array}\right]+\left[\begin{array}{c}b_{\mathtt{U}_{1}}\\ \vdots \\ b_{\mathtt{U}_{\epsilon_{1}}}\end{array}\right]u_{k} + \left[\begin{array}{c}f_{\mathtt{U}_{1}}\\ \vdots \\ f_{\mathtt{U}_{\epsilon_{1}}}\end{array}\right]w_{k}
\end{align}
where, $b_{\mathtt{U}_{1}},\cdots$ and $f_{\mathtt{U}_{1}},\cdots$ are formed from the rows of $B_{\mathtt{U}}$ and $F_{\mathtt{U}}$ respectively.  Since $\tilde{x}_{\mathtt{U},k}^{1}$ can not be specified from the stochastic dynamic equations, the subsystem is called an under-determined subsystem.   As a result, we can not find a prior distribution for $\tilde{x}_{\mathtt{U},k}^{1}$ from the transformed stochastic equations.  To reflect our lack of knowledge of this variable, we will assume the following uninformative prior distribution on this random variable $\tilde{x}_{\mathtt{U},k}^{1}$:
\begin{align}
\tilde{x}_{\mathtt{U},k}^{1}\sim \mathtt{N}(\mu_{\mathtt{U}}^{1},q^{2})
\end{align}
where $q>0$ is chosen to be large to make the prior uninformative.  We also assume that $\tilde{x}_{\mathtt{U},k}^{1}$ is independent from the random sequences $w_{k},v_{k}$.  However, we can not estimate $\tilde{x}_{\mathtt{U},k}^{1}$ with this type of prior.  The only way we can estimate this descriptor variable is by having an observation $y_{k}$ that depends on $\tilde{x}_{\mathtt{U},k}^{1}$.  This condition is fulfilled when $[E_{\mathtt{U}}^{T}~~H_{\mathtt{U}}^{T}]^{T}$ is full column rank.  This is known as the estimableness condition given in \cite{Nikoukhah1999}.

\subsubsection{The Over-Determined Subsystem Block}
In the transformed pencil $\lambda \tilde{E}-\tilde{A}$ multiple rows of zeros will occur depending on the number of dependent rows that differ by a scalar factor in the original matrix pencil $\lambda E-A$.  This will corresponds to the existence of the block $\mathtt{O}_{\eta_{0}}=0$ with size $\eta_{0}\times n$.  If $\lambda \tilde{E}-\tilde{A}$ happens to have a row of zeros, then this will correspond to the following difference equation in \eqref{dts2}:
\begin{align}
0=0+B_{\mathtt{O}_{\eta_{i}}}u_{k}+F_{\mathtt{O}_{\eta_{i}}}w_{k}
\end{align}
which imposes constraints on the input and hence is not a well defined stochastic equation since the assumption that $u_{k}$ is deterministic is now invalid.  As a conclusion, for a well defined stochastic model \eqref{dts1_ch3}, we can not have any dependency between the rows of the matrix pencil $\lambda E-A$; i.e. the matrix pencil must be full row rank.  Consequently, in order for the stochastic model \eqref{dts1_ch3} to be well defined, it can not have Kronecker blocks of the form $\mathtt{O}_{\eta_{i}}$.  This is equivalent of having $[E~~A]$ full row rank, which is one of the conditions for a well-posed estimation problem mentioned in \cite{Nikoukhah1999}.

\subsubsection{The Regular Subsystem Block}
The regular subsystem is composed of the Jordan blocks $\mathtt{J}_{\rho}$ and the nilpotent blocks $\mathtt{N}_{\sigma}$ which correspond to the finite and infinite elementary divisors of $\lambda E-A$ respectively.  These two blocks combine to form a square regular descriptor system.  

Assuming $\rho_{1}>0$, then the corresponding difference equation will be:
\begin{align}\label{xjord}
\tilde{x}_{k+1}^{(\mathtt{J})}=&A_{\mathtt{J}}\tilde{x}_{k}^{(\mathtt{J})}+B_{\mathtt{J}}u_{k}+F_{\mathtt{J}}w_{k}
\end{align}
As a result we obtain an ordinary state space difference equation and all state variables can be determined from this subsystem. 

Similarly, if we assume $\sigma_{1}>0$, the nilpotent block appearing in \eqref{nilp} corresponds to the following difference equation:
\begin{align}\label{nilp1}
E_{\mathtt{N}}\tilde{x}_{k+1}^{(\mathtt{N})}=&A_{\mathtt{N}}\tilde{x}_{k}^{(\mathtt{N})}+B_{\mathtt{N}}u_{k}+F_{\mathtt{N}}w_{k}
\end{align}
which can be expanded using \eqref{nilp} as follows:
\begin{align}\label{nilpexp}
\left[\begin{array}{c}\tilde{x}_{\mathtt{N},k+1}^{2}\\\tilde{x}_{\mathtt{N},k+1}^{3}\\ \vdots \\ \tilde{x}_{\mathtt{N},k+1}^{\sigma_{1}}\\0\end{array}\right]
=\left[\begin{array}{c}\tilde{x}_{\mathtt{N},k}^{1}\\ \tilde{x}_{\mathtt{N},k}^{2}\\ \vdots \\ \tilde{x}_{\mathtt{N},k}^{\sigma_{1} -1}\\\tilde{x}_{\mathtt{N},k}^{\sigma_{1}}\end{array}\right]+\left[\begin{array}{c}b_{\mathtt{N}_{1}}\\b_{\mathtt{N}_{c}}\\ \vdots \\ b_{\mathtt{N}_{\sigma_{1}-1}}\\b_{\mathtt{N}_{\sigma}}\end{array}\right]u_{k} + \left[\begin{array}{c}f_{\mathtt{N}_{1}}\\f_{\mathtt{N}_{c}}\\ \vdots \\ f_{\mathtt{N}_{\sigma_{1}-1}}\\f_{\mathtt{N}_{\sigma}}\end{array}\right]w_{k}
\end{align}
We recognize that the matrix $E_{\mathtt{N}}$ is nilpotent of degree $\sigma_{1}$; i.e. $E_{\mathtt{N}}^{i}\ne 0$ for $i<\sigma_{1}$ and $E_{\mathtt{N}}^{i}= 0$ for $i \ge \sigma_{1}$.  As a result, the solution to \eqref{nilpexp} can be expressed as follows:
\begin{align}\label{xnilp}
\tilde{x}_{k}^{(\mathtt{N})}=&E_{\mathtt{N}}\tilde{x}_{k+1}^{(\mathtt{N})}-B_{\mathtt{N}}u_{k}-F_{\mathtt{N}}w_{k}\nonumber \\
=&E_{\mathtt{N}}^{2}\tilde{x}_{k+2}^{(\mathtt{N})}-E_{\mathtt{N}}B_{\mathtt{N}}u_{k+1}-E_{\mathtt{N}}F_{\mathtt{N}}w_{k+1}-B_{\mathtt{N}}u_{k}-F_{\mathtt{N}}w_{k}\nonumber \\  \Rightarrow\tilde{x}_{k}^{(\mathtt{N})}=&-\sum_{i=0}^{\sigma_{1} -1}E_{\mathtt{N}}^{i}B_{\mathtt{N}}u_{k+i}-\sum_{i=0}^{\sigma_{1} -1}E_{\mathtt{N}}^{i}F_{\mathtt{N}}w_{k+i}
\end{align}
This subsystem forms the non-causal equations that correspond to the infinite elementary divisors of$(\lambda E-A)$.  We notice that $\tilde{x}_{k}^{(\mathtt{N})}$ can depend on future values of the input and noise sequences if the system is non-causal.  In order to determine this state, we need to know the future values of the input $u_{k}$ and disturbance sequence $w_{k}$.  The nilpotency of the matrix $E_{\mathtt{N}}$ determines the index of the descriptor model \eqref{dts1_ch3}; i.e. $\nu_{d}=\sigma_{1}$ for time invariant square descriptor models.  We recognize that a high index model; i.e. $\nu_{d}>1$ is not a sufficient condition for having a non-causal model, rather the matrices $B$ and $F$ must also have certain values such that $E_{\mathtt{N}}^{i}B_{\mathtt{N}}\ne 0$ and $E_{\mathtt{N}}^{i}F_{\mathtt{N}}\ne 0$ for $i=1,2,\cdots, \nu_{d}-1$.

Non-causal systems, however, do not exist in reality, unless the variation is with respect to space rather than time.  Techniques for verifying causality and designing the matrix $F$ so that \eqref{dts1_ch3} is causal.


\subsection{The MAP Estimate for Index 1 Causal Descriptor Systems}\label{map}
We have seen that the KCF is capable of performing the following tasks simultaneously:
\begin{enumerate} 
\item Introducing zero column vectors in the transformed matrix pencil that correspond to dependent columns $\lambda E-A$ that differ by a scalar or polynomial factor.  This allows us to determine the descriptor state variables that have no informative prior in the stochastic model upfront.  If $[E_{\mathtt{U}}^{T}~~H_{\mathtt{U}}]^{T}$ is full column rank, then any unspecified states can be estimated from measurements only.
\item Introducing zero row vectors in the transformed matrix pencil that correspond to redundant rows of $\lambda E-A$ that differ by a scalar or polynomial factor.  This allows us to determine if the stochastic model \eqref{dts1_ch3} is well defined as redundant rows will constrain the input sequence and render the estimation problem now well defined.  
\item Determining the Jordan blocks that correspond to the hidden stochastic state-space subsystems in the descriptor model \eqref{dts1_ch3}.
\item Determining the nilpotent blocks that correspond to the hidden non-causal subsystems.
\end{enumerate}
To find the MAP estimate for $x_{k}$, the MAP estimate for $\tilde{x}_{k}$ will be determined first and then inverse transformation will be used (using real transformation matrices $P,Q$ as given in \eqref{trans2}) to find the corresponding value of $\hat{\mathbf{x}}^{map}$.  Based on the above discussion, the following assumptions are needed:

\begin{assumption}{Index 1 Causal Descriptor Systems}\label{assumptions} \\
\begin{enumerate}
\item The matrix $[E~A]$ is full row rank; i.e. there is no dependency between the rows of the matrix pencil $\lambda E-A$ and hence the stochastic model \eqref{dts1_ch3} has a solution to any consistent initial condition.  For example, if the matrices are square, this condition will guarantee that $det(\lambda E-A)\ne 0$ $\forall \lambda \in \mathbb{C}$ which is the condition for system solvability \cite{kunkel2006}.  If the system is rectangular, then the number of rows must be smaller than the number of columns which guarantees existence of a solution to the initial value problem \cite{kunkel2006}. 
\item The matrix $[E^{T}~~H^{T}]^{T}$ is full column rank.  This will enable estimating descriptor states with no informative prior from the stochastic model \eqref{dts1_ch3}.  More precisely, it is required that $[E_{\mathtt{U}_{\epsilon_{i}}}^{T}~~H_{\mathtt{U}_{\epsilon_{i}}}^{T}]^{T}$ be full column rank because only the under-determined subsystems contains the unspecified states as explained earlier.  The two rank conditions are identical since transformation matrices do not alter the rank of the matrices. 
\item The random i.i.d. sequences $w_{k}\sim \mathtt{N}(0,I)$,  $v_{k}\sim \mathtt{N}(0,I)$ and the random variable $\bar{r}_{0}\sim \mathtt{N}(\bar{r}_{0},\mathbf{P}_{0})$ are uncorrelated.
\item The matrix $F$ is full column rank.  This is not a limiting assumption as if it is permitted to redefine the random variables $w_{k}$, then using QR decomposition: \cite{Nikoukhah1999}
\begin{align*}
F=[{F}^{\prime}~~0]\left[\begin{array}{c}Q_{1}\\Q_{2}\end{array}\right],~~~w_{k}^{\prime}=Q_{1}w_{k}
\end{align*}
where, $F^{\prime}$ is full column rank and $w_{k}^{\prime}$ are i.i.d zero mean unit covariance Gaussian vectors because $Q_{1}$ is orthonormal \cite{Nikoukhah1999}.
\item The index of the stochastic descriptor system \eqref{dts1_ch3} is 1 which can be verified using index calculation methods for square descriptor systems as given in \cite{kunkel2006}.  This will also ensure that the system is causal.
\end{enumerate}
\end{assumption}
Consequently, the set of equations that describe the original stochastic descriptor system \eqref{dts1_ch3} and noisy measurements after using KCF transformation and using the above assumptions are as follows:
\begin{subequations}\label{xall}
\begin{align}
E_{\mathtt{U}}\tilde{x}_{k+1}^{\mathtt{U}}=&A_{\mathtt{U}}\tilde{x}_{k}^{\mathtt{U}} +B_{\mathtt{U}}u_{k}+ F_{\mathtt{U}}w_{k} \\
\tilde{x}_{\mathtt{U},k}^{1}=&\mu_{\mathtt{U},k}^{1}+qs_{k} \\
\tilde{x}_{k+1}^{(\mathtt{J})}=&A_{\mathtt{J}}\tilde{x}_{k}^{(\mathtt{J})} +B_{\mathtt{J}}u_{k}+F_{\mathtt{J}}w_{k} \\
\tilde{x}_{k}^{(\mathtt{N})}=&-B_{\mathtt{N}}u_{k}-F_{\mathtt{N}}w_{k} \\
y_{k}=&\tilde{H}\tilde{x}_{k}+v_{k}\label{nilpri}
\end{align}
\end{subequations}
where an uninformative prior was specified for the undetermined state $\tilde{x}_{\mathtt{U},k}^{1}\sim \mathtt{N}(\mu_{\mathtt{U},k}^{1},q^{2})$, with $s_{k}$ is a normally distributed random sequence with zero mean and unit covariance independent from the noise sequences $w_{k},v_{k}$.  All of these equations are explicit in the descriptor state vector.  Since $[E~A]$ is full row rank, we do not have any over-determined subsystem blocks $\mathtt{O}_{\eta{j}}$. 
The MAP estimate of $\{\tilde{x}_{k}\}_{k=0}^{T}$ is obtained from the conditional distribution:
\begin{align}
p_{\mathbf{\tilde{x}}|\mathbf{y}}(\mathbf{\tilde{x}}|\mathbf{y}) \propto & p_{\mathbf{y}|\mathbf{\tilde{x}}}(\mathbf{y}|\mathbf{\tilde{x}})p_{\mathbf{\tilde{x}}}(\mathbf{\tilde{x}})
\end{align}
where $\mathbf{\tilde{x}}=\{\tilde{x}_{k}\}_{k=0}^{T}$.  In the following the subscript for the prior distributions will be omitted for simplicity of notation and will be implied that the distributions are with respect to the random variable $\tilde{x}_{k}$.  From \eqref{xall} and noting the independences between random variables, we recognize that:
\begin{align}
p(\mathbf{\tilde{x}})=&p(\tilde{x}_{0}^{(\mathtt{U})},\tilde{x}_{0}^{(\mathtt{J})},\tilde{x}_{0}^{(\mathtt{N})},\cdots,\tilde{x}_{T}^{(\mathtt{U})},\tilde{x}_{T}^{(\mathtt{J})},\tilde{x}_{T}^{(\mathtt{N})})\nonumber \\
=&p(\tilde{x}_{0}^{(\mathtt{U})},\tilde{x}_{0}^{(\mathtt{J})})p(\tilde{x}_{1}^{(\mathtt{U})},\tilde{x}_{1}^{(\mathtt{J})},\tilde{x}_{0}^{(\mathtt{N})}|\tilde{x}_{0}^{(\mathtt{U})},\tilde{x}_{0}^{(\mathtt{J})})\times \cdots \times p(\tilde{x}_{T}^{(\mathtt{U})},\tilde{x}_{T}^{(\mathtt{J})},\tilde{x}_{T-1}^{(\mathtt{N})}|\tilde{x}_{T-1}^{(\mathtt{U})},\tilde{x}_{T-1}^{(\mathtt{J})})\nonumber \\
=&p(\tilde{x}_{0}^{(\mathtt{U})},\tilde{x}_{0}^{(\mathtt{J})})\prod_{k=0}^{T-1}p(\tilde{x}_{k+1}^{(\mathtt{U})},\tilde{x}_{k+1}^{(\mathtt{J})},\tilde{x}_{k}^{(\mathtt{N})}|\tilde{x}_{k}^{(\mathtt{U})},\tilde{x}_{k}^{(\mathtt{J})}) \nonumber \\
=&p(E_{\mathtt{U}}\tilde{x}_{0}^{(\mathtt{U})},E_{\mathtt{J}}\tilde{x}_{0}^{(\mathtt{J})})p(\tilde{x}_{\mathtt{U},0}^{1})\prod_{k=0}^{T-1}p(E_{\mathtt{U}}\tilde{x}_{k+1}^{(\mathtt{U})},E_{\mathtt{J}}\tilde{x}_{k+1}^{(\mathtt{J})},A_{\mathtt{N}}\tilde{x}_{k}^{(\mathtt{N})}|A_{\mathtt{U}}\tilde{x}_{k}^{(\mathtt{U})},A_{\mathtt{J}}\tilde{x}_{k}^{(\mathtt{J})})p(\tilde{x}_{\mathtt{U},k+1}^{1})\label{priorx}
\end{align}
where the matrix multiplications in the last relationship were introduced by examining the relationships \ref{xunder}, \ref{xjord} and \ref{nilpexp} presented earlier.  More precisely,  $E_{\mathtt{U}}$ is a matrix with a zero vector in the first column and identity matrix in the remaining columns.  Hence, $E_{\mathtt{U}}\tilde{x}_{0}^{(\mathtt{U})}$ is independent from $\tilde{x}_{\mathtt{U},0}^{1}$ and $E_{\mathtt{U}}\tilde{x}_{k+1}^{(\mathtt{U})}$ is independent from $\tilde{x}_{\mathtt{U},k+1}^{1}$.  Also, we note that both $E_{\mathtt{J}}$ and $A_{\mathtt{N}}$ are identity matrices and the multiplication with the random variables have no effect.  Finally, the value of $\tilde{x}_{\mathtt{U},k+1}^{\epsilon + 1}$ is independent from the value of $\tilde{x}_{\mathtt{U},k}^{\epsilon}$ and therefore we may use $A_{\mathtt{U}}\tilde{x}_{k}^{(\mathtt{U})}$ instead of $\tilde{x}_{k}^{(\mathtt{U})}$ in the conditional distribution.  As a result, the conditional distribution in \eqref{priorx} can be obtained using the relations given in \eqref{xall} after variable substitution as follows:
\begin{align*}
p(E_{\mathtt{U}}\tilde{x}_{k+1}^{(\mathtt{U})},E_{\mathtt{J}}\tilde{x}_{k+1}^{(\mathtt{J})},A_{\mathtt{N}}\tilde{x}_{k}^{(\mathtt{N})}|A_{\mathtt{U}}\tilde{x}_{k}^{(\mathtt{U})},A_{\mathtt{J}}\tilde{x}_{k}^{(\mathtt{J})})=p_{\tilde{F}w_{k}}(\zeta)
\end{align*}
where,
\begin{align*}
\zeta = \left(\begin{array}{c}E_{\mathtt{U}}\tilde{x}_{k+1}^{\mathtt{U}}-A_{\mathtt{U}}\tilde{x}_{k}^{\mathtt{U}} +B_{\mathtt{U}}u_{k}\\
\tilde{x}_{k+1}^{(\mathtt{J})}-A_{\mathtt{J}}\tilde{x}_{k}^{(\mathtt{J})} +B_{\mathtt{J}}u_{k}\\
-\tilde{x}_{k}^{(\mathtt{N})}-B_{\mathtt{N}}u_{k}\end{array}\right),~~~
\tilde{F}=PF=\left[\begin{array}{c}F_{\mathtt{U}}\\ F_{\mathtt{J}}\\ F_{\mathtt{N}}\end{array}\right],
\end{align*}
Similarly we may find the distribution for the measurement conditioned on the state as:
\begin{align}
p_{\mathbf{y}|\mathbf{\tilde{x}}}(\mathbf{y}|\mathbf{\tilde{x}})=&
\prod_{k=0}^{T}p_{v_{k}}(y_{k}-\tilde{H}\tilde{x}_{k})
\end{align}
Given the prior for $Ex_{0}$ in \eqref{prior} we need to obtain the prior after after multiplication with $P$ as follows:
\begin{align*}
PEx_{0} \sim & \mathtt{N}(P\bar{r}_{0},PP_{0}P^{T})
\end{align*}
which can be rewritten as:
\begin{align}
\left[\begin{array}{ccc}E_{\mathtt{U}} & 0 & 0 \\ 0 & E_{\mathtt{J}} & 0\\ 0 & 0 & E_{\mathtt{N}}\end{array}\right]\left[\begin{array}{c}\tilde{x}_{0}^{(\mathtt{U})} \\ \tilde{x}_{0}^{(\mathtt{J})} \\ \tilde{x}_{0}^{(\mathtt{N})} \end{array}\right] \sim & \mathtt{N}\left(\left(\begin{array}{c}\bar{r}_{0}^{(\mathtt{U})}\\\bar{r}_{0}^{(\mathtt{J})}\\\bar{r}_{0}^{(\mathtt{N})}\end{array}\right),\mathbf{P}_{0}\right)
\end{align}
where $\mathbf{P}_{0}=PP_{0}P^{T}$.  Note that $E_{\mathtt{N}}=0$ and equation \eqref{nilpri} already determines the prior for $\tilde{x}_{0}^{(\mathtt{N})}$ with mean $-B_{\mathtt{N}}u_{0}$ and variance  $F_{\mathtt{N}}F_{\mathtt{N}}^{T}$.  Consequently, the negative logarithm of the conditional distribution can be written as:
\begin{align}\label{objr1}
&-\log p_{\mathbf{y}|\mathbf{\tilde{x}}}(\mathbf{\tilde{x}},\mathbf{y})-\log p(\mathbf{\tilde{x}}) \propto  \frac{1}{2}\left\|\begin{array}{c}E_{\mathtt{U}}\tilde{x}_{0}^{(\mathtt{U})}-\bar{r}_{0}^{(\mathtt{U})}\\ E_{\mathtt{J}}\tilde{x}_{0}^{(\mathtt{J})}-\bar{r}_{0}^{(\mathtt{J})}\\
0-\bar{r}_{0}^{(\mathtt{N})}    \end{array} \right\|_{\mathbf{P}_{0}}^{2}+\frac{1}{2}\sum_{k=0}^{T}\|y_{k}-\tilde{H}\tilde{x}_{k}\|_{R}^{2}\nonumber \\
& ~~~~~~~+\frac{1}{2}\sum_{k=0}^{T-1}\left\|\begin{array}{c}E_{\mathtt{U}}\tilde{x}_{k+1}^{(\mathtt{U})}-A_{\mathtt{U}}\tilde{x}_{k}^{(\mathtt{U})}-B_{\mathtt{U}}u_{k}
 \\
E_{\mathtt{J}}\tilde{x}_{k+1}^{(\mathtt{J})}-A_{\mathtt{J}}\tilde{x}_{k}^{(\mathtt{J})}-B_{\mathtt{J}}u_{k} \\
0-\tilde{x}_{k}^{(\mathtt{N})}-B_{\mathtt{N}}u_{k}\end{array}\right\|_{\mathbf{F}}^{2}+\frac{1}{2q^{2}}\sum_{k=0}^{T}\|\tilde{x}_{\mathtt{U},k}^{1}-\mu_{\mathtt{U},k}^{1}\|^{2}
\end{align}
where, $\mathbf{F}=\tilde{F}\tilde{F}^{T}$.  Notice that $\tilde{F}=PF$ is full column rank by the assumption that $F$ is full column rank.  Hence, $\tilde{F}\tilde{F}^{T}$ is non-singular and positive definite.  Taking the limit as $q \rightarrow \infty$ (to reflect our lack of prior for $\tilde{x}_{\mathtt{U},k}^{1}$) and using the relations for $PEQ,~PAQ$ and $PB$ in \eqref{mattrans} will result in  the following objective function:
\begin{align}
\hat{\mathbf{x}}^{map}=&\mbox{arg}\min_{\mathbf{x}}\frac{1}{2}\|PEx_{0}-P\bar{r}_{0}\|_{\mathbf{P}_{0}}^{2}+\frac{1}{2}\sum_{k=0}^{T}\|y_{k}-Hx_{k}\|_{R}^{2}+\frac{1}{2}\sum_{k=0}^{T-1}\|PEx_{k+1}-PAx_{k}-PBu_{k}\|_{\mathbf{F}}^{2}\nonumber\\
=&\mbox{arg}\min_{\mathbf{x}}\frac{1}{2}\|Ex_{0}-\bar{r}_{0}\|_{P_{0}}^{2}+\frac{1}{2}\sum_{k=0}^{T}\|y_{k}-Hx_{k}\|_{R}^{2}
+\frac{1}{2}\sum_{k=0}^{T-1}\|Ex_{k+1}-Ax_{k}-Bu_{k}\|_{Q}^{2} \label{uncorobj}
\end{align}
where, $P^{T}\mathbf{P}_{0}P=P_{0}$ and $Q=P^{T}\mathbf{F}P=FF^{T}$.  Hence, the MAP estimate for $x_{k}$ for causal index 1 descriptor systems of the form \eqref{dts1_ch3},\eqref{meas_ch3} can be found directly from the system matrices with no need of any transformation.  Solving this minimization problem is identical to solving the constrained maximum likelihood objective function derived in \cite{Nikoukhah1992} and \cite{Nikoukhah1999} by viewing the initial condition and input sequence as noisy measurements.  Hence, this establishes that the MAP and ML estimates are identical for state estimation problems that involve causal descriptor systems of index 1.

\newpage
\bibliographystyle{IEEEtran} 
\bibliography{references} 

\begin{thebibliography}{1}
\providecommand{\url}[1]{#1}
\csname url@samestyle\endcsname
\providecommand{\newblock}{\relax}
\providecommand{\bibinfo}[2]{#2}
\providecommand{\BIBentrySTDinterwordspacing}{\spaceskip=0pt\relax}
\providecommand{\BIBentryALTinterwordstretchfactor}{4}
\providecommand{\BIBentryALTinterwordspacing}{\spaceskip=\fontdimen2\font plus
\BIBentryALTinterwordstretchfactor\fontdimen3\font minus
  \fontdimen4\font\relax}
\providecommand{\BIBforeignlanguage}[2]{{%
\expandafter\ifx\csname l@#1\endcsname\relax
\typeout{** WARNING: IEEEtran.bst: No hyphenation pattern has been}%
\typeout{** loaded for the language `#1'. Using the pattern for}%
\typeout{** the default language instead.}%
\else
\language=\csname l@#1\endcsname
\fi
#2}}
\providecommand{\BIBdecl}{\relax}
\BIBdecl

\bibitem{Nikoukhah1992}
R.~Nikoukhah, A.~Willsky, and B.~Levy, ``{Kalman Filtering and Riccati
  equations for descriptor systems},'' \emph{Automatic Control, IEEE
  Transactions on}, vol.~37, no.~9, pp. 1325 --1342, sep 1992.

\bibitem{Nikoukhah1999}
R.~Nikoukhah, S.~Campbell, and F.~Delebecque, ``Kalman filtering for general
  discrete-time linear systems,'' \emph{Automatic Control, IEEE Transactions
  on}, vol.~44, no.~10, pp. 1829 --1839, oct 1999.

\bibitem{Gerdin2006}
M.~Gerdin, ``Identification and estimation for models described by
  differential-algebraic equations,'' Ph.D. dissertation, Department of
  Electrical Engineering, Linkoping University, 2006.

\bibitem{schon2006estimation}
T.~B. Sch{\"o}n, ``Estimation of nonlinear dynamic systems: Theory and
  applications,'' Ph.D. dissertation, Link{\"o}ping, 2006.

\bibitem{gantmakher1959}
F.~Gantmakher, \emph{The theory of matrices}.\hskip 1em plus 0.5em minus
  0.4em\relax Chelsea publishing company, 1959, vol.~2.

\bibitem{kunkel2006}
P.~Kunkel and V.~Mehrmann, \emph{Differential-algebraic equations: analysis and
  numerical solution}, ser. EMS textbooks in mathematics.\hskip 1em plus 0.5em
  minus 0.4em\relax European Mathematical Society, 2006.

\bibitem{berger2012}
T.~Berger and S.~Trenn, ``The quasi-kronecker form for matrix pencils,''
  \emph{SIAM Journal of Matrix Analysis and Applications}, vol.~33, no.~2, pp.
  336--368, 2012.

\bibitem{brull2007}
T.~Brull, ``Linear discrete-time descriptor systems,'' Master's thesis,
  Institut fur Mathematik, TU Berlin, 2007.

\end{thebibliography}

\end{document}